\documentclass{article}
\usepackage{fullpage}

\usepackage{algorithmic}
\usepackage{algorithm}
\usepackage{url}
\usepackage{graphicx}
\usepackage{color}
\usepackage{listings}
\usepackage{subfigure}

\usepackage[utf8]{inputenc}
\usepackage{amsmath}
\usepackage{amssymb}

\newtheorem{lemma}{Lemma}

\newenvironment{proof}{\par\noindent{\sc Proof:}\space}{$\Box$\protect\\ \par}

\floatname{algorithm}{Listing}

\lstset{language=C++,
  numbers=left, basicstyle=\small\ttfamily,tabsize=2,morekeywords={virtual,
    class, public, private, template, typename,bool,foreach,in, size\_t},
  numberstyle=\tiny\ttfamily, framexleftmargin=1mm, numbersep=4pt,
  keywordstyle=\color{blue}, breaklines=true}
\lstset{emph={char},emphstyle=\bfseries}

\usepackage{pgf}
\usepackage{tikz} 
\usetikzlibrary{chains,fit,shapes,arrows,automata,petri,topaths,backgrounds,calc}

\title{The Lock-free $k$-LSM Relaxed Priority Queue\thanks{This paper
    is a full version of a poster with the same title presented at
    PPoPP'15~\cite{Traff15:klsm}.}}

\author{
Martin Wimmer\\
Google\thanks{A large part of this author's work was carried out at and financed by the Vienna University of Technology}\\
\url{wimmerm@google.com}\\
\and
Jakob Gruber\\
Faculty of Informatics\\
Vienna University of Technology\\
\url{gruber@par.tuwien.ac.at}\\
\and
Jesper Larsson Tr\"aff \\
Faculty of Informatics\\
Vienna University of Technology\\
\url{traff@par.tuwien.ac.at}\\
\and
Philippas Tsigas\\
Computer Science and Engineering\\
Chalmers University of Technology\\
\url{tsigas@chalmers.se}\\
}

\begin{document}
\maketitle

\begin{abstract}
Priority queues are data structures which store keys in an ordered
fashion to allow efficient access to the minimal (maximal)
key. Priority queues are essential for many applications, e.g.,
Dijkstra's single-source shortest path algorithm, branch-and-bound
algorithms, and prioritized schedulers.

Efficient multiprocessor computing requires implementations of basic
data structures that can be used concurrently and scale to large
numbers of threads and cores. Lock-free data structures promise
superior scalability by avoiding blocking synchronization primitives,
but the \emph{delete-min} operation is an inherent scalability
bottleneck in concurrent priority queues. Recent work has focused on
alleviating this obstacle either by batching operations, or by
relaxing the requirements to the \emph{delete-min} operation.

We present a new, lock-free priority queue that relaxes the
\emph{delete-min} operation so that it is allowed to delete \emph{any}
of the $\rho+1$ smallest keys, where $\rho$ is a runtime configurable
parameter.  Additionally, the behavior is identical to a non-relaxed
priority queue for items added and removed by the same thread. The
priority queue is built from a logarithmic number of sorted arrays in
a way similar to log-structured merge-trees.  We experimentally
compare our priority queue to recent state-of-the-art lock-free
priority queues, both with relaxed and non-relaxed semantics, showing
high performance and good scalability of our approach.
\end{abstract}

\noindent
\textbf{Keywords} Task-parallel programming, priority-queue, concurrent data
structure relaxation, parallel single-source shortest path

\section{Introduction}

A priority queue is a data structure for maintaining a set of
\emph{keys} (potentially stored along with some data) such that the
minimum (or maximum) key can be efficiently accessed and removed. In
its simplest form, a priority queue supports insertion (\emph{insert})
of new keys, and finding and deleting a minimal key
(\emph{delete-min}). Priority queues may additionally support deletion
of arbitrary keys and decreasing (or increasing) the value of a key,
as well as operations to meld and split queues. Priority queue
operations can generally be performed in $O(\log n)$ time, some in
$O(1)$ time, but either insert or delete-min must take $\Omega(\log
n)$ time~\cite{Thorup2007-sn}.

Parallel and concurrent priority queues have been the subject of
research since the
1980s~\cite{ayani1990lr,BiswasBrowne87,Traff98:ppq,deo1992parallel,karp1993randomized,luchetti1993some,olariu1991optimal,prasad1995parallel,sanders1998randomized}.
While early efforts have focused mostly on parallelizing heap
structures~\cite{hunt1996efficient}, recent implementations of
priority
queues~\cite{alistarhspraylist,herlihy2012art,linden2013skiplist,shavit2000skiplist,SundellTsigas05}
were often based SkipLists~\cite{pugh1990skip}.  The randomized
priority queue~\cite{LiuSpear12} is based on a tree of sorted arrays.

It is commonly believed that \emph{lock-} and \emph{wait-free} data
structures provide the best scalability for multiprogrammed
environments, because stalling threads cannot block progress of (all)
other threads.  However, for priority queues, the delete-min operation
remains an inherently sequential scalability bottleneck.  Recent
approaches attempt to alleviate this obstacle by batching and
elimination~\cite{Calciu14}, or by relaxing the linearization
requirements for insertions~\cite{Wimmer2014-ct,Wimmer2014-ym} and
deletions~\cite{alistarhspraylist}. A very appealing, randomized
lock-based approach was recently described in~\cite{Sanders14}.

In this paper we present a new, lock-free concurrent priority queue
built from a logarithmic number of sorted arrays of keys, similar
to the \emph{log-structured merge-trees} used in databases~\cite{ONeil1996-kz}.
It relaxes linearization requirements on \emph{insertions}, thus
allowing each thread to batch together up to $k$ insert operations
before it is required to linearize them wrt.\ insertions by
other threads. The \emph{delete-min} operation is also relaxed so that
it may delete and return any of the $k+1$ smallest keys visible to all
threads. The parameter $k$ can be configured at run-time,
and can even be changed on a per-key basis.

Combining both relaxations, our $k$-LSM priority queue is allowed to
ignore up to $\rho = Tk$ keys inserted into the priority queue at any
time ($T$ being the number of threads), but never more. Despite the
relaxations, the priority queue preserves local semantics per thread,
so that keys inserted and deleted by the same thread will always be
deleted in the correct order. We provide a \texttt{C++} implementation
of this data structure. Experiments show high single thread
performance, and very good scalability when choosing a reasonably
large value for $k$.

The paper is structured as follows: we discuss general ordering
semantics for concurrent data structures in
Section~\ref{sec:ordering_semantics}. In Section~\ref{sec:lsm} we
explain the sequential LSM algorithm used as basis for our concurrent
priority queue. Our concurrent algorithm and its implementation is
described in Section~\ref{sec:klsm}. Section~\ref{sec:correctness}
establishes correctness and progress guarantees. Finally, we
experimentally compare the $k$-LSM priority queue to other
state-of-the-art priority queue implementations using a synthetic
benchmark, and a concurrent variant of Dijkstra's algorithm for
single-source shortest paths.

\section{Ordering semantics}
\label{sec:ordering_semantics}

Ordering semantics provide trade-offs between scalability and
linearizability guarantees on \emph{update operations} (\emph{insert}
and \emph{delete-min}) as well as \emph{read operations}
(\emph{find-min}).

\paragraph{Global ordering semantics}
\label{sec:global_ordering_semantics}
The strictest possible semantics for concurrent ordered containers is
to linearize all update operations by all threads with regard to each
other. For priority queues this will typically lead to high contention
on concurrent \emph{delete-min} operations, since all threads will
attempt to remove the same item, and only one can succeed.

\paragraph{Local ordering semantics}
\label{sec:purely_local}
The other end of the spectrum occurs when threads maintain their own,
local copy of the data structure. When a thread accesses the
thread-local copy of another thread, its operations are linearized
with regard to operations of other threads accessing the same
copy. Operations on distinct thread-local data structures are not
linearized with regard to each other, and no global guarantees can be
given. 
Purely local semantics can be found, e.g., in
work-stealing deques~\cite{Arora2001-fr}.

\paragraph{Quantitative or $\rho$-relaxation}
\label{sec:k-relaxation}
Afek et al.~\cite{Afek2010-lh} introduced an alternative consistency
model called \emph{quasi linearizability}, which extends
linearizability by allowing operations to occur out of a correct
linearizable order.  Quasi linearizability imposes an upper bound on
the \emph{distance} each operation is allowed to have from a correct
linearizable operation order.  Quasi linearizability has been used as
a consistency condition for FIFO
queues~\cite{Afek2010-lh,Basin2011-xs}, but is not restricted to
these. Later work by Henzinger et al.~\cite{Henzinger2013-vr} provided
a closely related model called \emph{quantitative relaxation}.

In previous work~\cite{Wimmer2014-ct,Wimmer2014-ym}, we introduced the
term \emph{$\rho$-relax\-ation} to describe quantitative relaxation on
data structures, where an upper bound, $\rho$, can be given on the
number of items that can be \emph{skipped} on data structure
accesses. We say an item is \emph{skipped} whenever an operation on
the data structure is supposed to return an item according to global
ordering semantics, but instead returns the item it would return if
the other item did not exist. This includes the case where a
null-value is returned by a data structure, making it look empty,
since all stored items were skipped.

We distinguish two types of $\rho$-relaxation: \emph{temporal} and
\emph{structural} as shown in
Figure~\ref{fig:temp_vs_struct_rho}. Temporal $\rho$-relaxation is
based on the recency of items, and closely related to quasi
linearizability. A temporally $\rho$-relaxed data structure is only
allowed to skip the $\rho$ most recently added items. Temporal
$\rho$-relaxation can be applied \emph{globally}, so that only the
$\rho$ most recent items added by \emph{any} thread can be skipped, or
\emph{locally}, where the $\rho$ most recent items by \emph{each}
thread can be skipped. In contrast, structural $\rho$-relaxation is
only concerned with the number of items that are allowed to be skipped
at any point in time regardless of their recency.

Note that in a $\rho$-relaxed data structure all items still need to
be accessible by all threads at any point in time after their
insertion to keep the $\rho$-relaxation guarantees in case of stalling
threads. The relaxation allows to deliberately ignore a certain number
of items to improve scalability.

\begin{figure}
\centering
\subfigure[\texttt{\{A,B,C\}}]{
\begin{tikzpicture}[start chain=going right,
    box/.style={
		draw,rectangle,
		text centered,
		minimum width=5mm,
		minimum height=5mm,
		font=\footnotesize,
		fill=black!10
	},
	node distance=0pt]

\node[box] at (0, 0) {$A$};
\node[box] at (0, 5mm) {$B$};
\node[box] at (0, 10mm) {$C$};

\draw (5mm,12.5mm) -- node[sloped, above] {$\rho_t$} (5mm, 2.5mm);
\draw (4mm,12.5mm) -- (6mm,12.5mm);
\draw (4mm,2.5mm) -- (6mm,2.5mm);

\draw (-5mm,2.5mm) -- node[sloped, above] {$\rho_s$} (-5mm, 12.5mm);
\draw (-4mm,12.5mm) -- (-6mm,12.5mm);
\draw (-4mm,2.5mm) -- (-6mm,2.5mm);
\end{tikzpicture}
}
\subfigure[\texttt{push($D$)}]{
\begin{tikzpicture}[start chain=going right,
    box/.style={
		draw,rectangle,
		text centered,
		minimum width=5mm,
		minimum height=5mm,
		font=\footnotesize,
		fill=black!10
	},
	node distance=0pt]

\node[box] at (0, 0) {$A$};
\node[box] at (0, 5mm) {$B$};
\node[box] at (0, 10mm) {$C$};
\node[box] at (0, 15mm) {$D$};

\draw (5mm,17.5mm) -- node[sloped, above] {$\rho_t$} (5mm, 7.5mm);
\draw (4mm,17.5mm) -- (6mm,17.5mm);
\draw (4mm,7.5mm) -- (6mm,7.5mm);

\draw (-5mm,7.5mm) -- node[sloped, above] {$\rho_s$} (-5mm, 17.5mm);
\draw (-4mm,17.5mm) -- (-6mm,17.5mm);
\draw (-4mm,7.5mm) -- (-6mm,7.5mm);
\end{tikzpicture}
}
\subfigure[\texttt{pop($C$)}]{
\begin{tikzpicture}[start chain=going right,
    box/.style={
		draw,rectangle,
		text centered,
		minimum width=5mm,
		minimum height=5mm,
		font=\footnotesize,
		fill=black!10
	},
	node distance=0pt]

\node[box] at (0, 0) {$A$};
\node[box] at (0, 5mm) {$B$};
\node[box, dotted, fill=none] at (0, 10mm) {$C$};
\node[box] at (0, 15mm) {$D$};

\draw (5mm,17.5mm) -- node[sloped, above] {$\rho_t$} (5mm, 7.5mm);
\draw (4mm,17.5mm) -- (6mm,17.5mm);
\draw (4mm,7.5mm) -- (6mm,7.5mm);

\draw (-5mm,2.5mm) -- node[sloped, above] {$\rho_s$} (-5mm, 17.5mm);
\draw (-4mm,17.5mm) -- (-6mm,17.5mm);
\draw (-4mm,2.5mm) -- (-6mm,2.5mm);
\end{tikzpicture}
}
\caption{Temporal vs. structural $\rho$ relaxation on a stack for
  $\rho=2$. The items that can be relaxed by structural
  $\rho$-relaxation are marked with $\rho_s$, by temporal with
  $\rho_t$. After the pop, temporal $\rho$-relaxation is not allowed
  to skip $B$, since two items were added to the stack after $B$, even
  though $C$ was deleted in the meantime.}
\label{fig:temp_vs_struct_rho}
\end{figure}
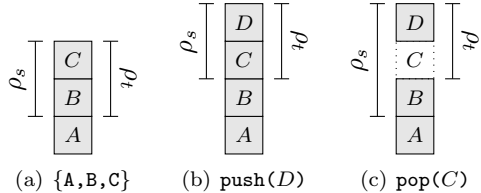

\section{Log-structured Merge-tree}
\label{sec:lsm}
We now present our new priority queue based on \emph{log-structured
  merge-trees} (LSM). Log-structured merge-trees~\cite{ONeil1996-kz}
were introduced in the database community, where they are used as a
disk-based index structure. They consist of a logarithmic number of
sorted arrays and provide high efficiency for applications with large
amounts of item retrievals and removals and rare lookups. The data
structure is appealing for priority queue implementations because that
finding the minimum (maximum) item can be done much faster than other
lookups. Furthermore, because the data structure is designed to reduce
the number of disk accesses and has a low number of non-contiguous
disk accesses it can likewise be implemented in a cache-efficient
manner\footnote{The LSM presented here was invented independently,
  based on requirements for concurrent relaxed priority queues; only
  later we discovered the relation to the data structure used in the
  database community.}.

A log-structured merge-tree (LSM) priority queue operates on a
logarithmic number of sorted arrays, which we call \emph{blocks}, as
depicted in Figure~\ref{fig:lsm}. Each block is assigned a
\emph{level}, and a block with level $l$ can store a sorted array of
$n$ keys, where $2^{l-1} < n \leq 2^l$. To guarantee logarithmic
lookup time for retrieving a minimal key, the LSM allows at most one
block of each level at any point in time. If this property is violated
by insertion or deletion of a key, the two blocks with the same level
will be merged into one block with the next-higher level repeatedly
until the property is fulfilled again.

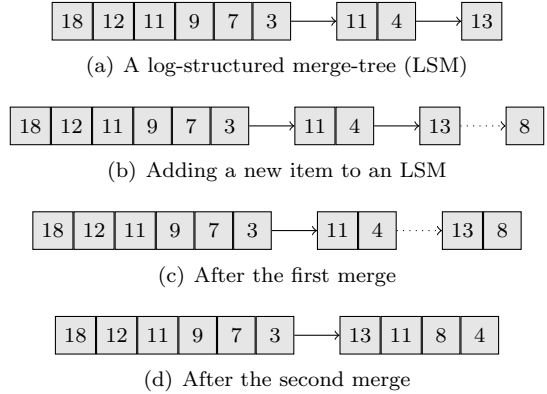
\begin{figure}
\centering
\subfigure[A log-structured merge-tree (LSM)]{
\begin{tikzpicture}[start chain=going right,
    basebox/.style={
		draw,rectangle,
		text centered,
		minimum width=5mm,
		minimum height=5mm,
		font=\footnotesize,
		fill=black!10
	},
	box/.style={
		on chain,
		basebox
	},
	every join/.style={->},
	node distance=0pt]

\foreach \id/\value in {1/$18$, 2/$12$, 3/$11$, 4/$9$, 5/$7$, 6/$3$}
{
  \node[box] (node d\id) {\value};
}

\node[node distance=6mm, box, join] (node 9) {$11$};
\node[box] (node 10) {$4$};

\node[node distance=6mm, box, join] (node 11) {$13$};

\end{tikzpicture}
\label{fig:lsm}
}

\subfigure[Adding a new item to an LSM]{
\begin{tikzpicture}[start chain=going right,
    basebox/.style={
		draw,rectangle,
		text centered,
		minimum width=5mm,
		minimum height=5mm,
		font=\footnotesize,
		fill=black!10
	},
	box/.style={
		on chain,
		basebox
	},
	every join/.style={->},
	node distance=0pt]

\foreach \id/\value in {1/$18$, 2/$12$, 3/$11$, 4/$9$, 5/$7$, 6/$3$}
{
  \node[box] (node d\id) {\value};
}

\node[node distance=6mm, box, join] (node 9) {$11$};
\node[box] (node 10) {$4$};

\node[node distance=6mm, box, join] (node 11) {$13$};

\node[node distance=6mm, box] (node 12) {$8$};
\draw[->, dotted] (node 11) -- (node 12);
\end{tikzpicture}
\label{fig:lsm_push_merge_1}
}

\subfigure[After the first merge]{
\begin{tikzpicture}[start chain=going right,
    basebox/.style={
		draw,rectangle,
		text centered,
		minimum width=5mm,
		minimum height=5mm,
		font=\footnotesize,
		fill=black!10
	},
	box/.style={
		on chain,
		basebox
	},
	every join/.style={->},
	node distance=0pt]

\foreach \id/\value in {1/$18$, 2/$12$, 3/$11$, 4/$9$, 5/$7$, 6/$3$}
{
  \node[box] (node d\id) {\value};
}

\node[node distance=6mm, box, join] (node 9) {$11$};
\node[box] (node 10) {$4$};

\node[node distance=6mm, box] (node 11) {$13$};
\node[box] (node 12) {$8$};
\draw[->, dotted] (node 10) -- (node 11);
\end{tikzpicture}
\label{fig:lsm_push_merge_2}
}

\subfigure[After the second merge]{
\begin{tikzpicture}[start chain=going right,
    basebox/.style={
		draw,rectangle,
		text centered,
		minimum width=5mm,
		minimum height=5mm,
		font=\footnotesize,
		fill=black!10
	},
	box/.style={
		on chain,
		basebox
	},
	every join/.style={->},
	node distance=0pt]

\foreach \id/\value in {1/$18$, 2/$12$, 3/$11$, 4/$9$, 5/$7$, 6/$3$}
{
  \node[box] (node d\id) {\value};
}

\node[node distance=6mm, box, join] (node 9) {$13$};
\node[box] (node 10) {$11$};
\node[box] (node 11) {$8$};
\node[box] (node 12) {$4$};
\end{tikzpicture}
\label{fig:lsm_push_merge_4}
}

\caption{A log-structured merge-tree (LSM)}
\label{fig:lsm_merge}
\end{figure}

An \emph{insertion} is shown in Figure~\ref{fig:lsm_merge}. A single
key block at level $0$ is first created and appended to the
LSM. Merges are then performed from the end of the list until no two
blocks with the same level exist.  The \emph{find-min} operation
returns the minimal key in the LSM. Since blocks are sorted, the
minimal key of each block can be accessed in constant time, and the
minimum over all minimal block keys must be returned. The
\emph{delete-min} operation works similarly, but also removes the
minimal key from the LSM. If a block contains too few elements for its
level after removal, it is shrunk to the next-smaller level and merged
with another block of same level if necessary. Thus, the amortized
complexity of all operations is $O(\log n)$.

\section{The {\large \textit{$k$}}-LSM priority queue}
\label{sec:klsm}
The $k$-LSM priority queue consists of two separate components, the
\emph{shared $k$-LSM} and the \emph{distributed LSM priority
  queue}. Both can also be used as a standalone priority queue, but
each has drawbacks that make it beneficial to combine it with the
other.

The shared $k$-LSM priority queue, described in
Section~\ref{sec:shared_klsm}, can provide good guarantees on the
ordering semantics, but suffers from a sequential bottleneck on
insertions and various maintenance operations. The frequency of both
types of operations can be reduced by roughly a factor of $k$ by
inserting sorted blocks of $k$ keys instead of single keys. We use the
distributed LSM priority queue for batching keys together.

The distributed LSM priority queue, described in
Section~\ref{sec:dist_lsm}, does not suffer from these sequential
bottlenecks, because synchronization is kept low. On the other hand,
it has only local ordering semantics and is also wasteful in memory
usage. To achieve stronger ordering guarantees and good space bounds
we combine it with the shared $k$-LSM priority queue, and bound the
size of the distributed LSM by the parameter $k$. Keys from the
distributed LSM are transferred to the shared $k$-LSM priority queue
whenever the bounds are reached, as described in
Section~\ref{sec:combine}.  We first assume that a garbage collector
is available, but describe in Section~\ref{sec:mem_man} how memory
management can be done manually on systems without garbage
collection. Extensions to the data structure are discussed in
Section~\ref{sec:extensions}.

\paragraph{External interface}
\label{sec:inter}
The external interface of our priority queue consists of two
functions: \texttt{insert} and \texttt{try\_delete\_min}. The
\texttt{insert} function inserts a single key into the priority queue
and always succeeds; \texttt{try\_delete\_min} attempts to find the
minimal key (according to the ordering semantics) and delete it. This
function returns a flag indicating whether the operation succeeded (a
key was found and deleted) and the minimal key (on success). The
operation will fail if the priority queue is empty, and may spuriously
fail as long as it is guaranteed that a key will eventually be
returned given enough attempts. Implementations of both functions are
shown in Section~\ref{sec:combine}.

The interface can be extended by an analogous
\texttt{try\_\-find\_\-min} function, and a \texttt{size} function.
The latter returns the number of keys in the priority queue and is
allowed to be off by up to $\rho$ from the actual value, where $\rho$
is the relaxation parameter explained in
Section~\ref{sec:correctness}.

\paragraph{Shared components}
\label{sec:shared_comp}
Each key in the priority queue is wrapped in a small \texttt{Item}
structure. The blocks of an LSM consist of pointers to \texttt{Item}
instances, and more than one pointer to an \texttt{Item} is allowed to
exist.

An \texttt{Item} stores the \texttt{key} alongside a boolean
\texttt{flag}. Items are logically deleted by performing an atomic
\texttt{test\_and\_set} on \texttt{flag}, wrapped in the member
function \texttt{Item::take}. Pointers to logically deleted
\texttt{Item} instances are lazily cleaned out of blocks whenever
blocks are resized or merged.

\begin{algorithm}
\begin{lstlisting}[mathescape=true,columns=flexible,escapechar=---]
struct Block {
	atomic<int> filled;
	int level;
	atomic<Item*>* items;

	Block(int $l$):filled(0), level($l$) {
		items = new atomic<Item*>[$2^l$];
	}

	void append(Item* item) {
		// Only copy items that are not logically deleted
		if (!item->flag) items[filled++] = item;
	}

	Block* copy(int $l$) {
		Block* nb = new Block($l$);
		for (int i = 0; i<filled; ++i) nb->append(items[i]);
		return nb;
	}

	void merge_in(Block* b1, Block* b2); // not shown

	Block* shrink() {
		int $f$ = filled;
		// Check whether items have been logically deleted
		while ($f$ > 0 && !items[$f-1$]->flag) --$f$;

		// Check whether we still use the correct level
		int $l$ = level;
		while ($l>0$ && $f \leq 2^{l-1}$) --$l$;

		// It is necessary to shrink the block
		if ($l$ < level) {
			// Recurse, since copy may clean out items
			return copy($l$)->shrink();
		}
		filled = $f$;
		return this;
}};
\end{lstlisting}
\caption{Pseudocode for the \texttt{Block} data structure, which
  stores a sorted array of \texttt{Item} pointers.}
\label{lst:block}
\end{algorithm}

The \texttt{Block} data structure stores items in decreasing key order
and is shown in Listing~\ref{lst:block}. The level of a block is the
base~$2$ logarithm of the size of its \texttt{Item} array. The member
variable \texttt{filled} stores the actual number of items in the
block. The \texttt{append} method adds a single item to the end of the
block, but only if it has not been logically deleted.  The
\texttt{copy} method copies the block into a new block of given level,
and uses the \texttt{append} method to filter out all logically
deleted items.  The \texttt{merge\_in} method (code not shown)
performs a standard two-way merge of the given blocks and stores the
result using \texttt{append}.  Finally, the \texttt{shrink} method
scans the end of the block for logically deleted items, decrements
\texttt{filled} and if necessary copies the items into a smaller
block.

\subsection{Shared {\large $k$}-LSM priority queue}
\label{sec:shared_klsm}

The shared $k$-LSM priority queue uses a fairly straightforward
parallelization idea: Blocks in the LSM are stored as an array of
pointers to blocks, sorted by level. This array is accessible to all
threads by a single pointer. Each update to the priority queue will
atomically replace the pointer to the array by a pointer to a new
array containing all modifications using a compare-and-swap.

This copy-on-write mechanism is safe since blocks are not modified
once they have been added to the array with the exception of the
member variable \texttt{filled}, which is modified in the
\texttt{shrink} method. While there can be data races on accesses to
\texttt{filled}, all possible outcomes will lead to the block being in
a valid state, since the values being written to \texttt{filled} may
only be larger than the correct value, but never smaller. Merging and
shrinking will always modify new blocks and not modify an existing
block shared with other threads.

\paragraph{Bottlenecks}
There are two main sequential bottlenecks in this implementation:
updating the shared block array, and deleting the minimal
item. Updates to the block array can be reduced by doing bulk
insertions. As a side-effect this leads to the average block being
larger, thus also reducing the frequency of shrinking and merging
operations. We bulk insertions by batching together inserted items
using the distributed LSM priority queue presented in
Section~\ref{sec:dist_lsm}.

The other sequential bottleneck of the shared $k$-LSM priority queue,
deleting a minimal item, is inherent to priority queues. We reduce its
effects by relaxing a \texttt{delete\_min} operation to take any of
the $k+1$ smallest keys in the priority queue, instead of only the
minimal one. The selection is performed uniformly at random, and falls
back to taking the minimal item in the same block if the selected item
was already deleted by another thread. In addition, a thread will
always first search for its locally inserted minimal key and per
default delete this key it if is one of the $k+1$ smallest keys. This
ensures that the shared $k$-LSM priority queue fulfills local ordering
semantics, as described in Section~\ref{sec:purely_local}.

\paragraph{Implementation}
The array, which is atomically replaced whenever a structural update
to the LSM is made, is stored in a data structure called
\texttt{BlockArray} as shown in Listing~\ref{lst:block_array}.  This
also contains an array of pivot indices for each block, which separate
keys guaranteed to be among the smallest $k+1$ keys from other keys.

The methods \texttt{insert}, \texttt{consolidate} and
\texttt{calculate\_pivots} modify a \texttt{BlockArray} and may thus
only be called if it is not visible to other threads. The
\texttt{insert} method adds a block to the \texttt{BlockArray} at its
correct level position, and calls \texttt{consolidate} to ensure that
the levels of blocks in the array are strictly decreasing.

\begin{algorithm}
\begin{lstlisting}[mathescape=true,columns=flexible,escapechar=---]
struct BlockArray {
	Block* blocks[MAX_LEVELS];
	int k_pivots[MAX_LEVELS];
	int size;

	void insert(Block* block);
	bool consolidate();
	void calculate_pivots();
	BlockArray* copy();

	Item* find_min() {
		int total = 0;
		// Find out how many elements we can choose from
		for (int i = 0; i < size; ++i) {
			total += blocks[i]->filled - k_pivots[i];
		} // We can assume that total > 0

		// Select one item uniformly at random
		int r = rand_int(0, total - 1);

		// Find block for r
		for (int i = 0; i < size; ++i) {
			int range = blocks[i]->filled - k_pivots[i];
			if (range <= r) r -= range;
			else {
				// Selected element is in this block
				if (r != range - 1) {
					Item* item = blocks[i]->items[k_pivots[i]+r];
					// Check whether not deleted
					if(!item->flag) return item;
				}
				// Fall back to minimal element in block
				return blocks[i]->items[blocks[i]->filled - 1];
}}}};
\end{lstlisting}
\caption{The struct for storing an array of blocks.}
\label{lst:block_array}
\end{algorithm}

The \texttt{consolidate} method is responsible for shrinking blocks
and performing merges if necessary. It performs two passes on the
\texttt{BlockArray}. In the first pass, blocks are scanned beginning
with the smallest block and shrunk if necessary. If a block after
shrinking has a smaller level than its successor in the
\texttt{BlockArray} it is merged with its successor. In the second
pass, the array is compacted by getting rid of null pointers and empty
blocks. The return value signifies whether a merge occurred during
consolidation.

The \texttt{calculate\_pivots} method selects a pivot value, which is
one of the $k+1$ smallest keys in the LSM, and writes the offset of
the first key less or equal to the pivot key in each block into the
array \texttt{k\_pivots}. The \texttt{copy} method creates an
identical copy of a \texttt{BlockArray}, by simply copying all data
into a new instance. The \texttt{find\_min} method randomly selects
one out of the up to $k$ items with smallest key. If that item is not
marked as deleted it is returned, otherwise the algorithm falls back
to the minimal item in the block.

\begin{algorithm}
\begin{lstlisting}[mathescape=true,columns=flexible,escapechar=---]
struct SharedKLSM {
	atomic<BlockArray*> shared;
	thread_local BlockArray* observed;
	thread_local BlockArray* snapshot;

	void refresh_snapshot(); {
		observed = shared;
		snapshot = observed->copy(observed->level);
	}

	bool push_snapshot() {
		if (shared.cas(observed, snapshot)) return true;
		return false;
	}

	void insert(Block* block); // not shown

	Item* find_min() {
		while (true) {
			if (shared != observed) refresh_snapshot();
			if (snapshot == nullptr) return nullptr;

			Item* item = snapshot->find_min();
			if (item->flag) {
				// Consolidate and push if merges occurred
				bool push = snapshot->consolidate();
				if (snapshot->size == 0) {
					snapshot = nullptr;
					push = true;
				}
				if (push) push_snapshot();
			} else return item;
}}};
\end{lstlisting}
\caption{The shared $k$-LSM priority queue.}
\label{lst:shared_klsm}
\end{algorithm}

The implementation of the shared $k$-LSM priority queue itself is
found in Listing~\ref{lst:shared_klsm}. It mainly consists of a
pointer to an instance of \texttt{BlockArray} called \texttt{shared},
which is shared between all threads, and two thread-local pointers to
\texttt{BlockArray}. Each thread uses the thread-local pointer
\texttt{snapshot} to store a consistent snapshot of
\texttt{shared}. Since this snapshot is private to each thread, it is
also used as a staging area to prepare an update to
\texttt{shared}. The thread-local pointer \texttt{observed} stores a
pointer to the instance of \texttt{shared} that was used to create the
snapshot. All three pointers are initialized to
null-pointers.

The \texttt{refresh\_snapshot} method updates the snapshot by first
copying the \texttt{shared} pointer to \texttt{observed} and then
storing a copy of the referenced \texttt{BlockArray} in
\texttt{snapshot}. The \texttt{push\_snapshot} method is responsible
for updating the sha\-red array after the private snapshot was
modified. This operation may fail if the shared array was modified
after the snapshot was taken. The \texttt{insert} method inserts a
block into \texttt{snapshot}, consolidates, and then attempts to push
the snapshot. Since this may fail due to the shared array being
modified, it will retry on a new snapshot until it succeeds.

The \texttt{find\_min} method finds and returns the minimal key from
the snapshot. If this key was already marked as deleted by another
thread, the array is consolidated. This may trigger a merge operation
or result in an empty snapshot. In both cases, an attempt is made to
replace \texttt{shared} with the modified snapshot. Failure to update
\texttt{shared} means that another thread already succeeded in
consolidating it and thus no retry is necessary.

\paragraph{Local ordering semantics}
To simplify the presentation, the implementation shown here does not
support local ordering semantics, which require that a thread will
never skip items that it inserted itself. To support this we use a
Bloom filter with each \texttt{Block} that identifies all threads that
contributed items to the block. The \texttt{find\_min} operation in
\texttt{BlockArray} then checks the Bloom filter of each block for the
current thread ID. The minimal key out of the blocks with the current
thread ID in the Bloom filter is then compared with the key of the
randomly selected item, and the smaller item returned. We use 64-bit
Bloom filters with two hash-values obtained by tabular hashing. Since
the Bloom filters are only updated when two blocks are merged, no
synchronization mechanism is necessary (the block being merged into is
not shared with other threads until the merge is completed and the
Bloom filter updated).  Our implementation used with the benchmarks
supports local ordering semantics.

\subsection{Distributed LSM priority queue}
\label{sec:dist_lsm}

The distributed LSM priority queue uses a parallelization idea
inspired by work-stealing: each thread maintains its own priority
queue, and inserts and deletes keys from it. Only when a thread-local
priority queue is empty, that thread will try to access keys stored in
the priority queue of another thread. For this we rely on a scheme
called \emph{spying}~\cite{Wimmer2013-kr}: a thread scans the priority
queue of a randomly selected thread and copies the pointers to items
found in that priority queue into its own priority queue. Spying is
similar to work-stealing, the difference being that it is
\emph{non-destructive}: it does not remove items from the victim's
priority queue. This is necessary to preserve local ordering
semantics.

The main advantage of the distributed LSM priority queue is
scalability: it is essentially embarassingly parallel, and the
potential parallelism increases with the number of threads inserting
items.  The main disadvantage is its lack of global \emph{delete-min}
ordering guarantees.  To provide ordering guarantees (when required),
it is necessary to combine the distributed LSM data structure with
some other data structure.

\paragraph{Implementation}
In the distributed LSM priority queue, each thread has its own
instance of the priority queue. A shared array storing pointers to all
available instances of the priority queue is used for \emph{victim}
selection in the spying operation.

\begin{algorithm}
\begin{lstlisting}[mathescape=true,columns=flexible,escapechar=---]
struct DistLSM {
	atomic<Block*> blocks[MAX_LEVELS];
	int size;

	void insert(Item* item) {
		Block* b = new Block(0);
		b->append(item);

		int i = size;
		// Note: Merge is non-destructive. Old blocks stay
		// available throughout the loop
		while (i > 0 && blocks[i-1]->level <= b->level) {
			Block* b2 = new Block(b->level + 1);
			b2->merge_in(blocks[i-1], b);
			b = b2->shrink();
			--i;
		}
		// Only now replace largest of old blocks with block
		// containing all merged items
		blocks[i] = b;
		// Now redundant blocks become inaccessible
		size = i+1;
	}

	void consolidate(); // not shown
	Item* find_min();

	bool spy(DistLSM* victim) {
		for (int i = 0; i < victim->size; ++i) {
			Block* b = victim->blocks[i];
			int $l$ = b->level;
			if (b != nullptr && (size == 0 || 
                        $l$ < blocks[size-1]->level)) {
				blocks[size++] = b->copy($l$);
			}
		}
		return size != 0;
}};
\end{lstlisting}
\caption{The distributed LSM priority queue.}
\label{lst:dist_lsm}
\end{algorithm}

The implementation of such a thread-local LSM priority queue is shown
in Listing~\ref{lst:dist_lsm}. It consists of an array with pointers
to blocks, and a \texttt{size} integer giving the number of blocks in
the array. Blocks are stored in strictly decreasing block level
order. The \texttt{insert} method creates a block of level $0$
containing the new item. It then performs merges with other blocks in
the LSM starting from the back until no block of same level is
found. The old (non-merged) blocks stay accessible throughout the
merge. The block containing all merged blocks is made visible
replacing the old blocks once all merges have been performed. All
items stay accessible throughout the merge, but may be encountered
twice by spying threads.

The \texttt{consolidate} method scans the block array for blocks
needing to be shrunk, and performs block merges if necessary. It will
only remove references to blocks being consolidated, after the
consolidated blocks are made available. This ensures that a
\texttt{spy} operation has the chance to find all items stored in the
LSM of its victim at all times. The \texttt{find\_min} method behaves
similarly to its counterpart in the sequential LSM data structure.

The \texttt{spy} method scans through the blocks of the victim and
copies the blocks it finds. Since the underlying array at the victim
may be modified during the scan, care is taken only to copy blocks
that will not violate the invariant that blocks are stored in strictly
decreasing order. It is not necessary for the blocks obtained by
\texttt{spy} to be a consistent snapshot, since the distributed LSM
priority queue does not provide any guarantees on the ordering of
items created by other threads.

\paragraph{Space Efficiency}

To ensure better space bounds, \texttt{spy} can be restricted to never
copy more than $k$ items (not shown in code for simplicity). Since
\texttt{spy} can only be called if the thread-local LSM is empty, this
will ensure that no more than $O(n + Tk)$ space is used, where $n$ is
the number of unique items in the distributed LSM and $T$ the number of
threads. Note that such a restriction will implicitly happen
for the combined data-structure described in the next section, since
no blocks with a level larger than $\log_2 k - 1$ will exist to spy
from, and each spied block must have a level smaller than the
previously spied block.

\subsection{Combining both priority queues}
\label{sec:combine}
We now combine the shared $k$-LSM priority queue and the distributed
LSM priority queue into a single data structure, as shown in
Listing~\ref{lst:klsm}. This priority queue consists of one
distributed LSM priority queue per thread, a single shared $k$-LSM
priority queue, and a victim array used for spying.

The \texttt{insert} method for the combined data structure simply
creates an item and inserts it into the distributed LSM priority queue
using a modified version of its \texttt{insert} method. In this
modified version, if after performing all necessary merges the
resulting block is bigger than a certain size, the block will be
inserted into the shared $k$-LSM priority queue instead of being added
to the thread-local \texttt{DistLSM}.

In the \texttt{delete\_min} method both priority queues are checked
for the minimal item, and the minimum of both marked as deleted using
an atomic test-and-set on the flag of the item returned. If the found
item was already deleted by another thread, this process is
repeated. If both are empty, \texttt{spy} is called and the operation
repeated if \texttt{spy} was successful.

\begin{algorithm}
\begin{lstlisting}[mathescape=true,columns=flexible,escapechar=---]
struct KLSM {
	thread_local DistLSM dist;
	SharedKLSM shared;
	DistLSM* victims[NUM_THREADS]

	void insert(Key key) {
		Item* item = new Item(key);
		dist.insert(item, &shared);
	}

	bool delete_min(Key& key) {
		do {
			Item* item;

			do {
				item = dist.find_min();
				if (item == nullptr) item = shared.find_min();
				else {
					Item* tmp = shared.find_min();
					if(tmp != nullptr && tmp->key < item->key)
						item = tmp;
				}
			// Try marking as deleted and repeat otherwise
			} while (item != nullptr &&
				(item->flag || !item->flag.test_and_set()));

			if (item != nullptr) return item;

		// Try finding items at other threads
		} while (spy(victims[rand(0,NUM_THREADS-1)]));
		return nullptr;
}};
\end{lstlisting}
\caption{The $k$-LSM priority queue.}
\label{lst:klsm}
\end{algorithm}

\subsection{Memory Management}
\label{sec:mem_man}

So far we ignored the issue of memory management to simplify the
presentation of the algorithm. In systems with garbage collection, we
only have to clean out dangling pointers from the LSM arrays if
\texttt{filled} is decremented to ensure garbage collection of the
given items.

In systems without garbage collection, like in our \texttt{C++}
implementations, we have to manage memory manually. We use the
wait-free memory reuse scheme by Wimmer~\cite{Wimmer2014-ct} to manage
\texttt{Item} instances. Since the scheme is not ABA safe, we change
the \texttt{flag} variable in \texttt{Item} to an integer, which
allows items to be marked as deleted in an ABA-safe manner by
incrementing \texttt{flag} with an atomic compare-and-swap. Blocks
store the expected \texttt{flag} value together with each pointer to
\texttt{Item}.

It is guaranteed that no thread will need more than four instances of
\texttt{Block} per level at any point in time, which will be allocated
on first access. In \texttt{spy}, blocks need to be verified after
copying all items to ensure ABA safety.

Two instances of \texttt{BlockArray} per thread are sufficient. For
ABA safety a version number is maintained for each instance. We
allocate these instances aligned to 2048-Byte boundaries, allowing us
to steal the ten least significant bits of a pointer to
\texttt{BlockArray}, and work around the ABA problem by stamping the
pointer with a truncated version number. The full version number is
used for verification most of the time, and only a compare-and-swap on
\texttt{shared} needs to rely on the truncated value. To reduce the
risk of wraparounds of the truncated value resulting in an ABA
problem, the full version numbers are verified directly before each
compare-and-swap to minimize the time-frame in which a wrap-around may
occur. Ten bits, and minimizing the time-frame are more than
sufficient to make this safe in practice.

\subsection{Extensions}
\label{sec:extensions}

The \emph{decrease-key} operation is an often important operation on
priority queues, but is hard to implement in parallel. It can be
worked around in a non-linearizable way by deleting a key and
reinserting it with its new value. But even deletions can be hard to
accomplish, since this either requires keys to be unique (a
requirement we do not place on our priority queue), or to maintain a
unique id or pointer to each item that is exposed to applications. In
addition, not all types of priority queues are suitable for efficient
random deletions of keys. We resolve this issue by a lazy deletion
scheme similar to the one presented by Wimmer et
al.~\cite{Wimmer2014-ym}: the priority queue can query whether an item
needs to be deleted. This can be performed whenever it is convenient
for the priority queue, which for the LSM is whenever items are copied
into a new block (deleted items do not need to be copied) or when
checking whether a block can be shrunk (here deletion happens for
items at the end of a block). We use this lazy deletion technique in
the implementation of our SSSP benchmark application in
Section~\ref{sec:eval}.

Another common operation is \emph{meld}, where two priority queues are
merged into a single one. This is also easy to provide on LSM data
structures, since merging lies at the heart of the LSM idea. However,
it is hard to provide this in a linearizable fashion. We leave this
problem for future work.

\section{Correctness}
\label{sec:correctness}

In this section we prove the linearizability of the operations on our
data structure according to structurally $\rho$-relaxed and local
ordering semantics. In addition, we show that our priority queue
fulfils the lock-free progress guarantee.

\begin{lemma}
The \texttt{insert} operation is linearizable.
\label{lemma:insert_lin}
\end{lemma}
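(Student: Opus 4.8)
The plan is to exhibit, for every execution, a linearization point for each \texttt{insert} call at which the inserted key atomically becomes visible to all threads, and then argue that reordering the calls according to these points is consistent with a legal sequential history. Recall that \texttt{insert} (Listing~\ref{lst:klsm}) never fails and does exactly one thing: it creates a fresh \texttt{Item} and calls \texttt{dist.insert(item, \&shared)}, which in turn either (a) appends the item to the thread-local \texttt{DistLSM} of the calling thread, or (b) when the merged block grows beyond the threshold $k$, hands a block off to \texttt{shared.insert}. So there are two cases to handle, and the first task is to pin down the linearization point in each.

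First I would treat case (a), the purely thread-local insertion. Here I claim the linearization point is the write \texttt{size = i+1} at the end of \texttt{DistLSM::insert} (Listing~\ref{lst:dist_lsm}), which is the single step that makes the new block reachable from the victim array. Before this write, no concurrent \texttt{spy} can observe the new item (the merge loop only creates private blocks and writes \texttt{blocks[i] = b} one slot below the current \texttt{size}); immediately after it, the item is reachable by any thread that spies this queue, and it is also reachable by the owning thread's own \texttt{find\_min}. I would note that the merge loop is non-destructive, so a concurrent \texttt{spy} either sees the pre-insert set of blocks or the post-insert set, never a torn state, and in either case \texttt{spy} only copies pointers and preserves the strictly-decreasing-level invariant it checks explicitly; hence the effect of the \texttt{insert} on any observer is atomic at the \texttt{size} write. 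For case (b), the handoff to the shared structure, the linearization point is the successful \texttt{shared.cas(observed, snapshot)} inside \texttt{push\_snapshot} that first publishes a \texttt{BlockArray} containing (the block holding) this item; since \texttt{SharedKLSM::insert} retries on a fresh snapshot until the CAS succeeds and each snapshot is a private staging copy until published, exactly one such CAS per inserted item exists, and it atomically makes the item visible to all threads via \texttt{shared}. In both cases the chosen point lies strictly between the call's invocation and response, as required.

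Having fixed the linearization points, the second part is to check that the induced sequential order is legal for a priority queue under the relevant semantics. For \texttt{insert} in isolation this is almost immediate: an \texttt{insert} in a sequential priority queue is always enabled and its only postcondition is that the key is now present, which is exactly what the linearization point guarantees; there is no precondition to violate and no return value to constrain. The one subtlety worth spelling out is that an item, once made visible at its linearization point, stays accessible to all threads until it is logically deleted --- \texttt{copy}, \texttt{merge\_in}, \texttt{shrink} and \texttt{consolidate} only ever drop pointers to items whose \texttt{flag} is set, and in \texttt{DistLSM} blocks are only made unreachable after the merged block subsuming them has been published --- so no \texttt{insert} is ``undone'' by a later structural operation. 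I would also remark that because a thread always inserts into (and, per Section~\ref{sec:shared_klsm}, preferentially deletes from) its own \texttt{DistLSM} and its own contributions to shared blocks, the linearization order restricted to a single thread respects that thread's program order, which is what local ordering semantics demands.

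The main obstacle I anticipate is not the legality argument --- which is trivial for \texttt{insert} alone --- but nailing down the atomicity of the visibility transition in the presence of the concurrent, non-snapshotting \texttt{spy} and the copy-on-write publication of \texttt{BlockArray}: I need to rule out an interleaving in which some observer sees the inserted item, the item is then ``lost'' through a merge/consolidate/spy race, and a later observer no longer sees it, which would make the call behave like a delete as well as an insert. Handling this cleanly requires the invariant stated above (only logically-deleted pointers are ever discarded, and blocks are published before their predecessors are retired), so the bulk of the real work is establishing that invariant for \texttt{Block::copy}, \texttt{merge\_in}, \texttt{shrink}, \texttt{DistLSM::insert}/\texttt{consolidate}, and \texttt{BlockArray::consolidate}, and then observing that \texttt{insert} itself only adds a pointer and never triggers a discard of any live item.
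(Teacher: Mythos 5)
Your overall structure matches the paper's proof: you split by whether the item ends up published via the thread-local \texttt{DistLSM} or handed off to the shared $k$-LSM, take the successful CAS on \texttt{shared} inside \texttt{push\_snapshot} as the linearization point in the shared case, and argue that the key becomes reachable at the linearization point and stays reachable until it is logically deleted. The shared-path case and the reachability/legality discussion are fine (and more detailed than the paper's one-line treatment).

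The gap is in your case (a). You linearize at the write \texttt{size = i+1} and justify this with the claim that no concurrent \texttt{spy} can observe the new item before that write. This is false whenever the merge loop executes at least once: after $m \geq 1$ merges we have $i = \texttt{size} - m < \texttt{size}$, so the store \texttt{blocks[i] = b} overwrites a slot that is already inside the range $[0,\texttt{size})$ scanned by \texttt{spy}, and \texttt{b} is the merged block containing the freshly inserted item. A spying thread can therefore copy the item, and a subsequent \texttt{delete\_min} can mark and return it, strictly before your chosen linearization point; the sequential history induced by your points would then contain a deletion of a key that has not yet been inserted, which is not legal. (Slots \emph{below} \texttt{size} are exactly the observable ones; your parenthetical remark has this backwards.) The paper instead linearizes the distributed-path insert at the write of the (possibly merged) block into the array \texttt{blocks}, i.e., at the store that makes the item reachable. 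If you want a precise per-execution statement, take the linearization point to be \texttt{blocks[i] = b} when $i$ is below the old \texttt{size} (a merge occurred), and \texttt{size = i+1} only in the no-merge case, where the new slot lies outside the visible range until \texttt{size} is advanced. With that correction the remainder of your argument goes through essentially as in the paper.
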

\begin{proof}
An \texttt{insert} operation that leads to a \texttt{Block} being
published in the shared $k$-LSM priority queue is linearized at the
linearization point of \texttt{push\_snapshot}, which is the
compare-and-swap on \texttt{shared}. An \texttt{insert} operation,
which inserts the \texttt{Block} with the inserted item into the
distributed LSM priority queue is linearized when the block (which is
potentially a result of multiple merges) is written into the array
\texttt{blocks}.  In both cases, the inserted key becomes reachable
for all threads at the linearization point, and stays reachable at
least until it is marked as deleted.
\end{proof}

\begin{lemma}
The \texttt{try\_delete\_min} operation is linearizable with
structural $\rho$-relaxation semantics, where $\rho=Tk$, $T$ being the
number of participating threads and $k$ the relaxation configuration
parameter. It also fulfils local ordering semantics.
\label{lemma:del_lin}
\end{lemma}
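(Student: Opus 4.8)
The plan is to exhibit, for every successful `try\_delete\_min` call, a linearization point at which the returned \texttt{Item} is genuinely among the $Tk+1$ smallest keys present in the combined data structure, and to argue separately that spurious failures and the empty-return case are consistent with the relaxed specification and with lock-freedom's promise that some call eventually succeeds. The linearization point I would pick is the successful atomic \texttt{test\_and\_set} (or CAS, in the ABA-safe variant) on the flag of the returned item in the inner loop of \texttt{delete\_min} (Listing~\ref{lst:klsm}); this is the unique instant at which the item transitions from logically present to logically deleted, so using it guarantees that no two successful deletions can claim the same item and that the deletion is atomic with respect to all concurrent operations. By Lemma~\ref{lemma:insert_lin} every inserted key is reachable from its own linearization point until it is flagged, so at the chosen point the item is a legitimate element of the queue.

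The core of the argument is the rank bound: at the linearization point, the key returned is one of the $Tk+1$ smallest keys over the union of all $T$ distributed LSMs and the shared $k$-LSM. I would establish this compositionally. First, the shared $k$-LSM's \texttt{find\_min} only ever returns an item at or before the pivot offset \texttt{k\_pivots[i]} plus the random offset, and by construction of \texttt{calculate\_pivots} the pivot key is one of the $k+1$ smallest keys in the shared structure; hence the item chosen there has rank at most $k+1$ within the shared component (the fallback-to-block-minimum branch can only return something no larger, hence of no worse rank). Second, each distributed LSM contributes at most $k$ keys that could be "hidden" ahead of the true global minimum, because the combined \texttt{insert} caps the thread-local \texttt{DistLSM} at $k$ items (blocks exceeding the threshold are flushed to the shared structure, and spying is implicitly bounded by $k$ as noted in the Space-Efficiency paragraph). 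Since \texttt{delete\_min} takes the minimum of its own \texttt{DistLSM} and the shared \texttt{find\_min}, the only keys that can be globally smaller than the returned one are: the at most $k$ keys sitting unseen in each of the other $T-1$ threads' distributed LSMs, plus the at most $k$ keys that the shared \texttt{find\_min} skipped over via its pivot. Summing gives at most $(T-1)k + k = Tk$ keys strictly smaller, so the returned key has global rank at most $Tk+1$, which is exactly the structural $\rho$-relaxation guarantee with $\rho = Tk$. A subtlety I would treat carefully is that logically deleted items still physically sit in blocks; I would note that a flagged item is not counted toward anyone's rank (it has already been linearized out), and that the pivot and \texttt{total} computations read \texttt{filled}, whose races only over-count, never under-count, so the random choice still lands within a valid range and the rank bound is unaffected.

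For local ordering semantics I would invoke the thread's own \texttt{DistLSM.find\_min}: a key inserted by thread $t$ and not yet flushed is, by the non-destructive nature of spying, still in $t$'s own distributed LSM, so $t$'s call compares against it directly and never returns a larger key while that key is present; for keys $t$ inserted that \emph{were} flushed to the shared structure, the Bloom-filter mechanism described in the Local-ordering paragraph ensures \texttt{find\_min} compares against the minimum block key carrying $t$'s ID, so again $t$ never skips its own item. Thus along the subsequence of operations issued by a single thread the behavior is that of an exact priority queue. The main obstacle I anticipate is the bookkeeping for the rank bound under concurrency: I must argue that the "$k$ per thread" and "$k+1$ in the shared structure" bounds hold \emph{simultaneously} at the single instant of the linearization point, despite blocks migrating from distributed to shared LSMs, stale \texttt{observed}/\texttt{snapshot} pointers, and concurrent consolidations — essentially showing that an in-flight flushed block is never double-counted (present in both a \texttt{DistLSM} and the shared array in a way that inflates the total beyond $Tk$) and that a stalled thread holding an old snapshot cannot cause more than $k$ shared-side keys to be passed over. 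Finally, for lock-freedom I would observe that every iteration of the inner loop either returns, or witnesses another thread's successful flag-set or a successful \texttt{push\_snapshot}/consolidation, and the outer loop only repeats after a successful \texttt{spy} that strictly enlarges the caller's view; since the total number of live items is finite and each failed attempt corresponds to system-wide progress, infinitely many \texttt{try\_delete\_min} calls cannot all fail, giving the lock-free guarantee.
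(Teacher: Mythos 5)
There is a genuine gap, and it lies in your choice of linearization point. You linearize a successful \texttt{try\_delete\_min} at the successful \texttt{test\_and\_set} on the item's flag, and then claim the returned key is among the $Tk+1$ smallest ``at the linearization point.'' But your rank analysis is carried out on the thread's private snapshot of \texttt{shared} together with the bounded \texttt{DistLSM}s, and that state can be arbitrarily stale by the time the flag is set: between the moment the snapshot is observed and the moment of the \texttt{test\_and\_set}, other threads may insert (and linearize) an unbounded number of keys smaller than the candidate item. If the deletion is placed at the flag operation, it is ordered after those insertions, so the number of skipped items at that instant can exceed $\rho = Tk$ for any fixed $k$ --- the bound you need is simply false at that point under adversarial scheduling, which is exactly the ``main obstacle'' you flag but do not resolve. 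The paper avoids this by linearizing the successful operation earlier, at the last verification that \texttt{shared} equals \texttt{observed} (any inconsistency afterwards forces a failed attempt and a retry, including a fresh comparison); at that instant the snapshot faithfully reflects \texttt{shared}, and the decomposition you already have --- at most $(T-1)k$ keys hidden in other threads' \texttt{DistLSM}s plus at most $k$ keys passed over by the random choice among the $k+1$ pivoted keys --- yields the $Tk$ bound. Concurrent insertions after that point are concurrent with the whole \texttt{try\_delete\_min} interval, so ordering the deletion before them is a legitimate linearization; your flag-CAS point gives no such freedom.

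A secondary omission: an unsuccessful (including spuriously failing) \texttt{try\_delete\_min} also needs a linearization point and a skip bound, since returning ``empty'' while items exist counts as skipping them under the paper's definition of structural $\rho$-relaxation. The paper linearizes failure at the point where \texttt{shared} is found to be a null pointer after the thread-local \texttt{DistLSM} was found empty, where at most $(T-1)k \le \rho$ keys (those residing in other threads' \texttt{DistLSM}s) can be skipped, and notes that the subsequent \texttt{spy} may spuriously fail without affecting this point; your proposal only gestures at this case. Your treatment of local ordering semantics (own \texttt{DistLSM} always consulted, Bloom filters for blocks in the shared structure) matches the paper, and the lock-freedom discussion, while reasonable, belongs to a separate lemma rather than this one.
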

\begin{proof}
A successful \texttt{try\_delete\_min} is linearized at the point when
the private snapshot of \texttt{shared} is verified by comparing
\texttt{shared} with \texttt{observed} for the last time. Afterwards,
both the thread-local \texttt{DistLSM} and the private snapshot of
\texttt{shared} are treated as snapshots and the structurally
$\rho$-relaxed minimum is searched for. Any inconsistencies would lead
to failures and thus repetitions, including another comparison of the
two arrays.

If an item is successfully marked as deleted it is guaranteed to have
one of the $\rho=Tk$ smallest keys at the time the snapshot of
\texttt{shared} was taken. At most $(T-1)k$ keys from the thread-local
\texttt{DistLSM}s of other threads can be skipped, since the size of a
\texttt{DistLSM} is bounded by $k$. Another $k$ keys can be ignored
due to the uniformly random selection out of at most $k+1$ keys in
\texttt{SharedKLSM::find\_min}.

An unsuccessful \texttt{try\_delete\_min} (which includes spurious
failures) is linearized at the point when \texttt{shared} is found to
be a null-pointer after the thread-local \texttt{DistLSM} has been
found empty. At this point at most $(T-1)k$ keys can be skipped, which
is the maximum number of keys that can be stored in total in the
thread-local \texttt{DistLSM}s of other threads.  Since \texttt{spy}
is allowed to spuriously fail it does not contribute to the
linearization point of a failure even though it happens after
\texttt{shared} is found to be a null-pointer.

Local ordering semantics are fulfilled, since the thread-local
\texttt{DistLSM} is always checked for the minimal key, and also all
blocks in \texttt{SharedKLSM}, which might contain items added by the
given thread.
\end{proof}

\begin{lemma}
The \texttt{insert} operation is lock-free.
\label{lemma:insert_lf}
\end{lemma}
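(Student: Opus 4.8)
The plan is to show that an \texttt{insert} operation, which by definition always eventually succeeds (it never returns a flag), completes in a bounded number of steps unless some other thread makes progress. First I would trace the code path of \texttt{insert} in the combined \texttt{KLSM} structure: it allocates an \texttt{Item} and calls \texttt{dist.insert(item, \&shared)}. The insertion into the thread-local \texttt{DistLSM} touches only thread-private data — the merge loop runs at most $\mathrm{MAX\_LEVELS}$ times (the block level strictly increases each iteration and is bounded), and each \texttt{merge\_in}/\texttt{shrink} operates on freshly allocated, unshared blocks — so this part is wait-free in isolation and involves no synchronization primitives that can fail.

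Next I would handle the only branch that interacts with shared state: when the merged block exceeds the size threshold and must be pushed into the shared $k$-LSM via \texttt{SharedKLSM::insert}. That method builds a private snapshot via \texttt{refresh\_snapshot} (pure thread-local work plus a copy), calls \texttt{insert}/\texttt{consolidate} on the private \texttt{snapshot} (again bounded, since \texttt{consolidate} does two passes over at most $\mathrm{MAX\_LEVELS}$ blocks), and then calls \texttt{push\_snapshot}, whose sole blocking-capable step is the compare-and-swap on \texttt{shared}. The argument is the standard CAS-retry-loop lock-freedom argument: the loop retries only when the CAS fails, and the CAS on \texttt{shared} fails only because another thread successfully performed a CAS on \texttt{shared} in the interim — i.e., some other \texttt{insert} or \texttt{find\_min}/\texttt{consolidate} operation made progress. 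Hence the system as a whole always advances, which is exactly lock-freedom.

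The main obstacle, and the step deserving the most care, is ruling out a subtle non-progress scenario: I must confirm that \emph{every} iteration of the retry loop in \texttt{SharedKLSM::insert} that does \emph{not} complete corresponds to a \emph{distinct} successful CAS by another thread, with no way for a thread to spin without any global progress. This requires checking (i) that \texttt{refresh\_snapshot}, \texttt{copy}, \texttt{consolidate}, and the private \texttt{insert} each terminate in a bounded number of steps regardless of concurrent activity (true, since they operate on thread-private memory and loop only over the $O(\mathrm{MAX\_LEVELS})$ levels), and (ii) that the failed CAS genuinely witnesses a changed \texttt{shared} pointer, which — modulo the ABA concern, already addressed by the versioning scheme in Section~\ref{sec:mem_man} — it does. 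I would also note that \texttt{insert} never calls \texttt{spy} or any operation that may spuriously fail, so there is no additional failure path to account for.

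I would then conclude: since \texttt{insert} performs a bounded amount of thread-local work plus at most one CAS-retry loop in which each unsuccessful attempt is charged to a successful shared-state update by another thread, some \texttt{insert} (or other operation contending on \texttt{shared}) always completes in finitely many total steps. Therefore \texttt{insert} is lock-free, establishing Lemma~\ref{lemma:insert_lf}.
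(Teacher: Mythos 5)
Your proposal is correct and follows essentially the same route as the paper: split \texttt{insert} into the purely thread-local \texttt{DistLSM} path, which is wait-free, and the shared-path CAS-retry loop on \texttt{shared}, where each failed compare-and-swap is charged to another thread's successful update, which is precisely the paper's lock-freedom argument. Your additional checks (bounded per-iteration work in \texttt{refresh\_snapshot}/\texttt{consolidate} and the ABA versioning) only make explicit what the paper leaves implicit.
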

\begin{proof}
An \texttt{insert} that only operates on the thread-local
\texttt{DistLSM} is wait-free, since no other thread can block
progress. An insertion on the shared $k$-priority queue may fail to
update \texttt{shared} so that the insertion has to be restarted with
a new snapshot, but only if another thread succeeded in modifying
it. Since at least one thread has to make progress for another thread
to fail updating the shared array, the algorithm is lock-free.
\end{proof}

\begin{lemma}
The \texttt{delete\_min} operation is lock-free.
\label{lemma:del_lf}
\end{lemma}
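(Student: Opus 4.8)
The plan is to establish lock-freedom of \texttt{delete\_min} by tracing through its nested loop structure in Listing~\ref{lst:klsm} and showing that whenever a thread is forced to iterate (rather than return), some other thread has made progress. There are two loops to consider: the inner \texttt{do}-\texttt{while} that repeatedly searches for a minimal candidate and tries to mark it deleted, and the outer \texttt{do}-\texttt{while} that invokes \texttt{spy} when both queues appear empty. I would also need to account for the internal retry loop inside \texttt{SharedKLSM::find\_min} (Listing~\ref{lst:shared_klsm}), which loops while it keeps encountering already-deleted items.

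First I would argue that each individual call to \texttt{dist.find\_min()} terminates: the distributed LSM is thread-local, so no interference is possible, and it behaves like the sequential \texttt{find\_min}. Next I would argue that a single call to \texttt{shared.find\_min()} is itself lock-free: each iteration of its \texttt{while(true)} loop that does \emph{not} return either (a) refreshes the snapshot because \texttt{shared} changed — which means another thread completed a successful update of \texttt{shared}, i.e.\ made progress — or (b) found the selected item already flagged as deleted, in which case some other thread successfully executed \texttt{test\_and\_set} on that \texttt{Item}'s flag, again global progress. Since the snapshot has finitely many blocks each of finite size, case (b) can recur only finitely often before either a non-deleted item is found or the snapshot becomes empty (returning \texttt{nullptr}); and a \texttt{consolidate}/\texttt{push\_snapshot} following case (b) either succeeds (this thread's progress) or fails because another thread already consolidated (their progress). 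So \texttt{shared.find\_min()} cannot loop forever without some thread making progress.

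Then I would lift this to the inner \texttt{do}-\texttt{while} of \texttt{delete\_min}: it repeats only when the chosen \texttt{item} is non-null but either already has its flag set or a \texttt{test\_and\_set} fails; in both situations another thread has just logically deleted that item, so a \texttt{delete\_min} somewhere in the system made progress. Hence the inner loop cannot spin unboundedly without a system-wide successful deletion. For the outer loop, re-execution happens only when \texttt{spy} returns \texttt{true}, meaning it copied at least one block's worth of live items from a victim into the local \texttt{DistLSM}; I would note that after a successful \texttt{spy} the subsequent iteration's inner loop will either return one of those items or discover they were concurrently deleted — the latter again witnessing global progress — so the outer loop too is consistent with lock-freedom. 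It is worth remarking that \texttt{spy} returning \texttt{false} (including spurious failure when the victim genuinely has nothing live) causes \texttt{delete\_min} to return \texttt{nullptr}, which Lemma~\ref{lemma:del_lin} already permits as a (possibly spurious) failure, so no infinite outer loop arises from empty victims.

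The main obstacle I anticipate is ruling out a subtle livelock across the \emph{nested} loops: one must show there is no infinite execution of \texttt{delete\_min} by a fixed thread in which, although individual sub-operations terminate, the thread forever re-enters the outer loop via successful \texttt{spy}s while never itself completing a deletion. The resolution is the standard lock-free accounting argument: every iteration of either loop that does not terminate is \emph{paired} with a completed successful operation (a CAS on \texttt{shared} or a \texttt{test\_and\_set}/CAS on an \texttt{Item} flag) performed by \emph{some} thread in the system; since the total number of items and the total number of structural updates that can complete are each finite in any finite prefix of the execution unless some operation completes, infinitely many non-terminating iterations would force infinitely many completed operations — i.e.\ the system as a whole makes progress — which is exactly the definition of lock-freedom. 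I would close by noting that \texttt{delete\_min} uses no blocking primitive (only \texttt{test\_and\_set}, CAS, and thread-local work), so no held lock can stall other threads.
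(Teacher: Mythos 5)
Your proposal is correct and follows essentially the same argument as the paper: thread-local work terminates on its own, and every retry in any of the loops is witnessed by another thread's completed \texttt{test\_and\_set} or successful update/consolidation of \texttt{shared}, which is exactly the paper's "another thread made progress" accounting. You are in fact somewhat more careful than the paper, which simply declares both \texttt{find\_min} methods wait-free, whereas you correctly treat the retry loop inside \texttt{SharedKLSM::find\_min} as itself only lock-free; this extra detail does not change the route of the proof.
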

\begin{proof}
The \texttt{find\_min} methods of both priority queues are wait-free,
since they operate on thread-local snapshots that are not influenced
by the operations of other threads.

While the \texttt{delete\_min} operation might need more than one
attempt to delete an item even in a quiescent state, due to logically
deleted items not being physically removed, subsequent clean-up
operations on such failures ensure that an active item is eventually
found, or the priority queue empty. Other threads may increase the
number of iterations until success by deleting additional items, but
this means that another thread made progress, and thus
\texttt{delete\_min} is lock-free.
\end{proof}

\section{Evaluation}
\label{sec:eval}

We now compare our lock-free $k$-priority queue to the recent, relaxed
lock-free priority queues by Alistarh et al.~\cite{alistarhspraylist}
and Wimmer et al.~\cite{Wimmer2014-ym}, and the randomized lock-based
Multi-Queues by Rihani et al.~\cite{Sanders14}. We show that our data
structure has high absolute performance, due to its cache-efficient
layout and small amount of synchronization operations, and can provide
good scalability.

Since we focus on priority queues with relaxed semantics we cannot
show a full comparison to current state-of-the-art, non-relaxed
priority queues. Instead, we picked the lock-free priority queue by
Lind\'en and Jonsson~\cite{linden2013skiplist} as a representative of
such priority queues. We note again that scalability of exact priority
queues is limited. While it is often possible to achieve good
scalability on insertions~\cite{linden2013skiplist,LiuSpear12},
deleting the minimal key is inherently sequential. So for deletions, a
perfectly scalable exact priority queue can at best keep the absolute
throughput constant with the number of threads.

For comparison with the priority queue by Lind\'en and Jonsson, as
well as the SprayList and Multi-Queues we rely on a \emph{throughput}
benchmark, which lets all threads randomly insert and delete keys from
a priority queue that is \emph{prefilled} with a given number of
keys. This benchmark creates extremely high contention on the priority
queue and thus gives good insight into the practical scalability a
priority queue. The same benchmark was also used by Alistarh et
al.~\cite{alistarhspraylist} and Rihani et al.\cite{Sanders14} to
evaluate their data structures and thus facilitates direct
comparability.  In the experiments the ratio between insertions and
deletions is 50-50.

The data structures by Wimmer et al.~\cite{Wimmer2014-ym} are part of
a task scheduling system, and cannot be used as standalone data
structures. This makes it impossible to do a fair comparison with
these data structures with the throughput benchmark. Instead, we
adapted the \emph{SSSP benchmark} of~\cite{Wimmer2014-ym} to evaluate
their priority queues. This benchmark is a label-correcting version of
Dijkstra's single-source shortest path algorithm, which is
parallelized in a straightforward manner using a concurrent priority
queue. It uses a lazy deletion scheme in connection with reinsertion
of keys instead of an explicit \emph{decrease-key} operation. We
adapted our priority queue to support this lazy deletion scheme as
described in Section~\ref{sec:extensions}. Since this extension has a
significant impact on performance, but is not available in the other
priority queues, and since there was no easy way to directly delete
keys from these priority queues with the available implementations
either, we decided not to include these data structures in the SSSP
benchmark.

\paragraph{Environment}

We have implemented our data structure in \texttt{C++}. We relaxed the
memory consistency for operations on synchronization wherever possible
to reduce the synchronization cost. We used manual memory management
for our priority queue as described in Section~\ref{sec:mem_man}. Our
priority queue implementation, as well as the benchmarks used in this
paper are available for download as part of the Pheet
framework\footnote{\url{www.pheet.org}}.

All experiments were performed on an 80-core Intel Xeon based system
comprised of eight Intel Xeon E7-8850 processors of 10 cores each.
The system has 1TB of main memory. The experiments were run under
Debian Linux and were compiled using \texttt{gcc 4.9.1}. We verified
our findings using a 48-core AMD Opteron based machine. Due to space
considerations, and since the Intel system is closer to the machines
used by Alistarh et al., Sanders et al. and Wimmer et al., we 
give results only for the Intel system.

\paragraph{Methodology}
The throughput benchmark was run for $10$ seconds for each experiment,
and the average throughput per second is shown (for a 50-50 mix of
insertions and deletions). SSSP was run on Erd\H{o}s-R\'enyi random
graphs with $10000$ nodes and edge probability $50\%$; edge weights
are randomly chosen integers in the range $[1,100000000]$. All
experiments were repeated for 30 times. Mean values with confidence
intervals are reported in the plots.

\subsection{Results}

\begin{figure*}[t]
\begin{center}
\includegraphics{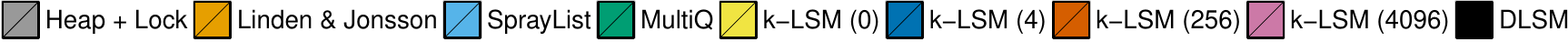}
\end{center}
\includegraphics{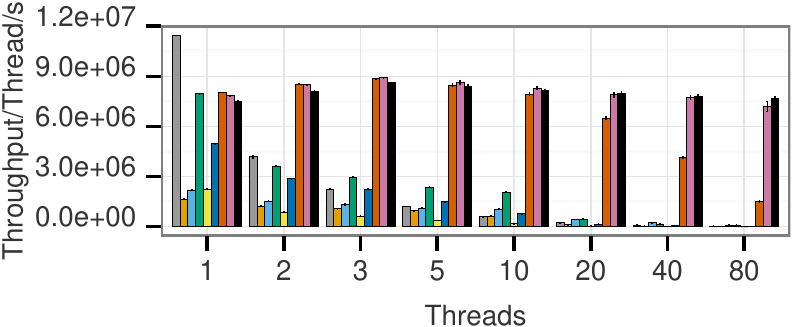}
\hfill
\includegraphics{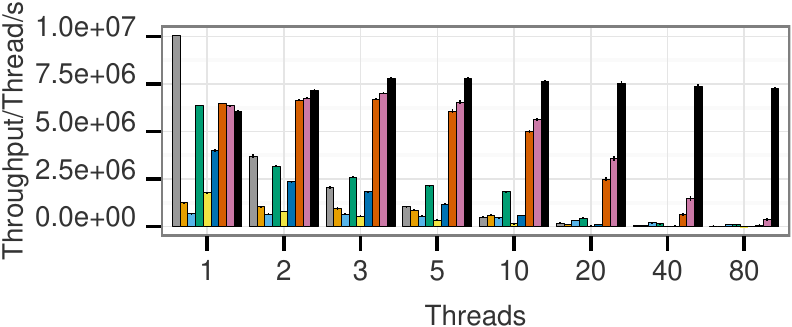}
\caption{Throughput per Thread per second for priority queues
  prefilled with $10^6$ (left) and $10^7$ (right) elements.  We use
  \emph{throughput per thread} for better readability, and thus a
  constant throughput in the plot corresponds to a linear speedup.}
\label{fig:throughput}
\end{figure*}

Results for the throughput benchmark can be seen in
Figure~\ref{fig:throughput}. The benchmarked implementations are a
binary heap protected by a spin-lock, the skiplist-based priority
queue by Lind\'en and Jonsson, the SprayList, the Multi-Queue by
Rihani et al., our priority queue with different values for $k$, and
the distributed LSM priority queue (DLSM), which is our priority queue
without $\rho$-relaxation guarantees.

We first observe that in a sequential setting, the performance of the
DLSM is close to the binary heap, and due to the local ordering
semantics also provides the same guarantees. Our priority queue with
$k=0$ is significantly slower due to the high overhead of maintaining
the shared $k$-LSM priority queue. We note that the Lind\'en and
Jonsson priority queue is slightly slower than $k$-LSM with $k=0$ due
to the worse cache-efficiency of skip-lists and the higher number of
required synchronization operations per insertion and deletion.

The Lind\'en and Jonsson priority queue scales well for a non-relaxed
priority queue and can outperform the binary heap with the global
lock, as well as our priority queue with $k=0$. The performance and
scalability of our data structure improves drastically with higher
$k$, making it easy to beat non-relaxed priority queues. Our DLSM even
scales superlinearly, due to the smaller number of keys stored per
thread as the number of threads increase.

The experiments for the SprayList were run with the default settings
of the implementation. We had to increase the number of allocators in
the custom allocator to $10$, otherwise the algorithm ran out of
memory. We were only able to run the benchmark for one second (instead
of ten), again due to the algorithm running out of memory. The
SprayList has better scalability than the Lind\'en and Jonsson
priority queue. While it seems to scale better than our priority queue
with $k=256$, its absolute performance is worse, except for the case
of $10^7$ elements and $80$ threads.

As for the relaxation, the SprayList will return one of the $O(T
\log_2^3 T)$ smallest keys whp.\ as opposed to our priority queue,
which will return one of the $Tk$ smallest keys on
\texttt{delete\_min}. The exact constant factors of the SprayList are
not documented, making a direct comparison with our priority queue
difficult. To give an example, for 64 threads, the SprayList will have
$\rho = 13824c$ for some constant $c>1$, whereas our priority queue
has $\rho = 64k$ for a configurable $k$ (e.g. for $k=256$,
$\rho=16384$). Note that for $k$-LSM $\rho$ is the worst-case
bound. For SprayList the worst-case cannot be bounded, since the
SprayList can be arbitrarily modified while the \texttt{delete\_min}
operation traverses the list. The SprayList also does not provide
local ordering semantics.

The Multi-Queue experiments were run with the
Boost\footnote{\url{www.boost.org}} 8-ary heap enabled and a setting
of $c=2$, which means that $2T$ priority queues are used. According to
the inventors of the Multi-Queue, the expected quality should roughly
correspond to the $k$-LSM priority queue with $k=4$. The Multi-Queue
exhibits very good performance and scalability. Unfortunately no
worst-case bounds on the quality can be given for the Multi-Queue's
\emph{delete-min} operation, since a stalling thread can block
delete-min access to an arbitrary number of keys.

Comparing the results for different prefill values, we observe that
our implementation performs better for a fixed $k$ when less elements
are stored in the priority queue. This comes from the fact that a
larger amount of items in the priority queue increases the probability
that the selected minimal key in the shared $k$-LSM is smaller than
the minimal key out of the thread-local \texttt{DistLSM} for this
benchmark. Thus the limited scalability of the shared $k$-LSM priority
queue becomes more dominant.

\begin{figure*}[t]
\begin{center}
\includegraphics{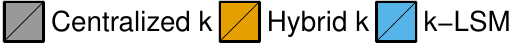}
\end{center}
\includegraphics{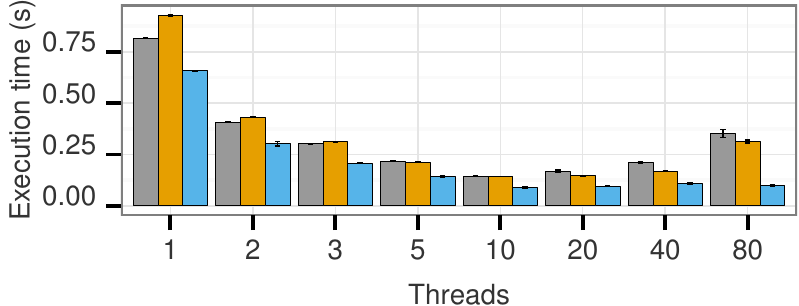}
\hfill
\includegraphics{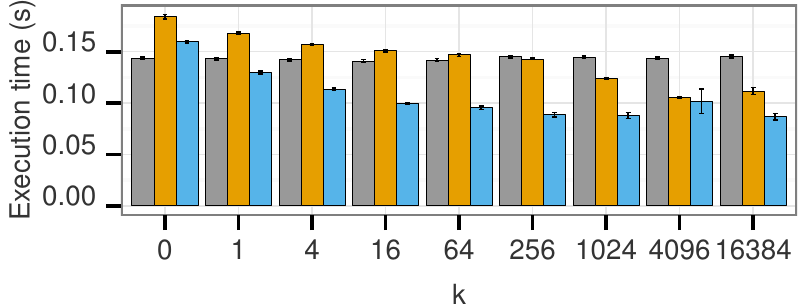}
\caption{Execution times for SSSP benchmark for varying numbers of threads ($k=256$) and values for $k$ ($10$ threads).}
\label{fig:sssp}
\end{figure*}

Results for the SSSP benchmark are shown in Figure~\ref{fig:sssp}. We
compare our priority queue ($k$-LSM) against the centralized
$k$-priority queue and hybrid $k$-priority queues by Wimmer et
al.~\cite{Wimmer2014-ym}. What can be seen here is that while the
algorithm used by the benchmark seems to have limited scalability so
that it does not scale to more than $10$ threads, our priority queue
provides stable performance while the other priority queues experience
a slowdown with more threads.

We also explore the influence of the relaxation parameter $k$ on the
performance of the algorithm. For this we fixed the number of threads
to $10$ and varied the parameter $k$. For the centralized $k$-priority
queue no visible difference between different values for $k$ can be
seen, making it the best priority queue for $k=0$, but not the best of
all. For the other priority queues there exists a best value, where
the additional work that needs to be performed due to the relaxation
does not outweigh the scalability gains. For the hybrid $k$-priority
queue this value is somewhere around $k=4096$, where on average $305$
additional iterations needed to be performed compared to a sequential
execution (not shown in graphs). For our priority queue we found such
a best value at $k=256$ (+$362$ iterations), but also had good results
with $k=16384$ (+$3965$ iterations).

\section{Conclusion}

We presented a novel, relaxed concurrent priority queue with lock-free
progress guarantees. In a sequential setting, its performance is
comparable to a binary heap, and exhibits very good scalability for a
reasonably large value for the relaxation parameter $k$. Performance-
and scalability-wise it is competitive to other, recent, lock-free
relaxed priority queues. In particular, it seems superior to the
SprayList~\cite{alistarhspraylist} in most settings, and has the
advantage of providing fixed and easy to calculate relaxation
guarantees and local ordering semantics.  The sensitivity of our
priority queue to the number of queued elements makes the SprayList
the better choice for settings with a huge number of elements and high
contention. In comparison with the priority queues by Wimmer et
al.~\cite{Wimmer2014-ym}, we provide both better scalability and
absolute performance for the SSSP application.

The distributed LSM priority queue has high scalability, except for
spying for which the cost increases with the number of items that are
spied. We intend to resolve this by allowing spy operations to copy
read-only references to blocks. However, the main scalability
bottleneck of our priority queue is the shared $k$-LSM priority queue.
We see lots of potential for improvement there, utilizing techniques
like lazy merging and helping schemes.

The appealing Multi-Queue by Rihani et al.~\cite{Sanders14} seems to
provide better trade-offs between expected quality and scalability,
but cannot provide any worst-case guarantees. We hope to collaborate
with Rihani et al.  to devise a relaxed priority queue that combines
the best of the two ideas.

Finally, an interesting area to explore is priority queues with
elimination and batching as presented by Calciu et
al.~\cite{Calciu14}. We believe that the LSM priority queue is
well-suited for batching techniques and it might be possible to
achieve scalability for our data structure using these techniques
instead of relaxation.

\bibliographystyle{abbrv}
\bibliography{klsm}

\end{document}